\newcommand{\bea}{\begin{eqnarray}}
\newcommand{\eea}{\end{eqnarray}}
\newcommand{\beq}{\begin{equation}}
\newcommand{\eeq}{\end{equation}}
\newcommand{\nn}{\nonumber}
\newcommand{\C}[1]{{\mathcal{#1}}}
\newcommand{\lrC}[1]{\left(#1\right)}
\newcommand{\lrS}[1]{\left[#1\right]}
\newcommand{\bigO}{{\C O}}
   \newtheorem{theorem}{Theorem}[section]
    \newtheorem{Definition}[theorem]{Definition}
    \newtheorem{Remark}[theorem]{Remark}
    \newenvironment{remark}{\begin{Remark}\rm}{\end{Remark}}
    \newtheorem{Example}[theorem]{Example}
    \newenvironment{example}{\begin{Example}\rm}{\end{Example}}
    \newtheorem{Assumptions}[theorem]{Assumptions}
    \newenvironment{proof}%
    {\rm \trivlist \item[\hskip \labelsep{\bf Proof. }]}%
    {\hspace*{\fill}$\Box$\endtrivlist}
    {\rm \trivlist \item[\hskip \labelsep{\bf Proof}]}%
    {\hspace*{\fill}$\Box$\endtrivlist}
\begin{document}
\title{The Lenard Recursion Relation and a\\Family of Singularly Perturbed Matrix Models
}
\author{Max Atkin
\address{IRMP, Universit\'{e} Catholique de Louvain, Belgium}
}
\maketitle
\begin{abstract}
We review some aspects of recent work concerning double scaling
limits of singularly perturbed hermitian random matrix models and
their connection to Painlev\'{e} equations. We present new results
showing how a Painlev\'{e} III hierarchy recently proposed by the
author can be connected to the Lenard recursion formula used
to construct the Painlev\'{e} I and II hierarchies.
\end{abstract}
\PACS{02.30.Gp,02.30.Hq,02.30.Ik}
  
\section{Introduction}
The semi-classical one hermitian matrix model is defined as the measure
\beq
\frac{1}{Z_n} e^{-n \Tr V(M)}dM
\eeq
on the set $\C{H}^{n\times n}(J)$ of $n\times n$ hermitian matrices whose
spectrum is a subset of intervals $J$. Here the potential $V$ is such that $V'$ is a rational
function. The angular degrees of freedom of $M$ may be integrated out of this
model to give a j.p.d.f for the eigenvalues $x_i$ of $M$,
\beq
\frac{1}{\hat{Z}_n} \Delta(x)^2 \prod^n_{i=1} w(x_i) \chi_J(x_i) dx_i,
\eeq
where $w(x):=e^{-nV(x)}$, $\Delta$ is the Vandemonde determinant and $\chi_J$ is the indicator function on $J$. Such models are a very general class of models for which the method of orthogonal polynomials can be used to give a solution. The method of orthogonal polynomials expresses the eigenvalue $k$-point correlation functions in terms of the correlation kernel,
\beq
K_n(x,y) = h_{n-1}^{-1}\frac{\sqrt{w(x)w(y)}}{x-y} \left(p_n(x)p_{n-1}(y) - p_n(y)p_{n-1}(x) \right),
\eeq
where  $p_j$, $j=0,1,\ldots$ are a family of monic polynomials of degree $j$ characterised by the relations
\beq\label{ortho p}
\int_J p_j(x) p_m(x) w(x) dx = h_j \delta_{j m}.
\eeq
The limiting mean eigenvalue density is given by
\beq
\rho(x) := \lim_{n \rightarrow \infty} \frac{1}{n}K_n(x,x),
\eeq
and describes the macroscopic behaviour of the eigenvalues for large $n$.

Such models have been studied at finite $n$ in \cite{BEH} and their
relation to integrable systems fully described. What is less known is
the types of critical behaviour in such a model. In the case that $V$ is polynomial and $J = \mathbb{R}$, a number of distinct critical points have been identified and studied over the last two decades. These have been classified as;
\begin{itemize}
\item Edge: The spectral density acquires extra zeros at an edge $a$ of its support.
The usual behaviour of $\rho$ near $a$ is $\rho(x) = \bigO(|x-a|^\frac{1}{2})$ however when extra
zeros are present the possible behaviours of $\rho$ are $\rho(x) = \bigO(|x-a|^\frac{4k+1}{2})$ with $k\in \mathbb{N}^0$. The limiting kernel in the neighbourhood of such points is constructed in terms of solutions to the $2k$-th Painlev\'{e} I equation.
\item Interior: The spectral density acquires new zeros at some point $a$ in the interior of its support. The behaviour of $\rho$ near $a$ is $\rho(x) = \bigO(|x-a|^{2k})$ with $k \in \mathbb{N}$. The limiting kernel in the neighbourhood of such points is constructed from solutions to the $k$-th Painlev\'{e} II equation.
\item Exterior: The spectral density acquires a new cut in its support. This transition is known as ``birth of a cut'' \cite{07112609}. Here the limiting kernel is constructed from Hermite polynomials and the local behaviour in the new cut mimics a GUE matrix model.  
\end{itemize}
In the more general case of the semi-classical model no such classification exists, however a number
of special cases have been investigated in the recent literature.

\begin{itemize}
\item The effect of logarithmic singularities in the potential have been investigated in a number of works. In \cite{0305044} the effect of a singularity in the bulk results in a kernel constructed with Bessel functions. In \cite{0508062} the situation of a logarithmic singularity coinciding with a interior critical point was found to lead to kernels containing solutions to the general Painlev\'{e} II equation. Finally logarithmic singularities at the edge of the spectrum results in general Painlev\'{e} XXXIV equations.
\item The addition of a hard edge also results in new behaviour. It has long been known that the kernel near the hard edge can be constructed in terms of Bessel functions. More recent work \cite{0701003} has considered the case of a hard edge meeting a soft edge, with the resulting kernel constructed in terms of Painlev\'{e} XXXIV transcendents. This was further extended in \cite{AA} to a hard edge meeting a edge critical point. It was shown that the associated Painlev\'{e} transcendents satisfy the $k$-th member of the Painlev\'{e} XXXIV hierarchy.
\item Finally, very recently, the behaviour of eigenvalues near poles in the potential have been studied. The case of a simple pole both in the bulk and at the hard edge have been investigated in \cite{CI,XDZ,XDZ2, BMM} and it was shown that the kernel can be constructed using solutions of Painlev\'{e} III. The case of higher order poles at a hard edge was studied in \cite{ACM} by the author and collaborators. The kernel in the neighbourhood of the pole was constructed using solutions of a member of a Painlev\'{e} III hierarchy.
\end{itemize}

In this short note we report on some new aspects of the work in \cite{ACM}. In particular we give a relation between the Painlev\'{e} III hierarchy defined in \cite{ACM} and the Lenard recursion relations that are ubiquitous in the Painlev\'{e} I and II hierarchies.

\section{A Painlev\'{e} III Hierarchy}
In \cite{ACM} the $k$-th member of the Painlev\'{e} III hierarchy was defined as the system of $k$ ODEs ($p=1,\ldots, k$),
\beq\label{P3def}
\sum_{q=0}^{p} \left(\ell_{k-p+q+1}\ell_{k-q}-(\ell_{k-p+q} \ell_{k-q})'' + 3\ell_{k-p+q}' \ell_{k-q}' - 4u \ell_{k-p+q} \ell_{k-q}\right) = \tau_p,
\eeq
for $k$ unknown functions $\ell_1=\ell_1(s), \ldots, \ell_k=\ell_k(s)$, with $\ell_{k+1}(s) = 0$ and $\ell_0(s) = \frac{s}{2}$. The $\tau_p$'s are constants that act as times. The quantity $u=u(s)$ is defined by,
\beq
u(s) = -\frac{1}{4 \ell_k^2}\left((\ell_k^2)'' - 3 (\ell_k')^2 + \tau_0\right).
\eeq

\begin{example}
For $k=1$ we have the equation
\beq
\ell_1''(s) = \frac{\ell_1'(s){}^2}{\ell_1(s)}-\frac{\ell_1'(s)}{s}-\frac{\ell_1(s){}^2}{s}-\frac{\tau _0}{\ell_1(s)}+\frac{\tau _1}{s},
\eeq
which we identify as a special case of the Painlev\'{e} III equation.
\end{example}
\begin{example}
If $k=2$ we have a system of two ODEs;
\beq
\frac{\tau _1}{2 \ell_1(s) \ell_2(s)}-\frac{\tau _0}{\ell_2(s){}^2}+\frac{\ell_2'(s){}^2}{\ell_2(s){}^2}-\frac{\ell_1'(s) \ell_2'(s)}{\ell_1(s) \ell_2(s)}+\frac{\ell_1''(s)}{\ell_1(s)}-\frac{\ell_2''(s)}{\ell_2(s)}-\frac{\ell_2(s)}{2\ell_1(s)} = 0, 
\eeq
and
\begin{align}
\frac{\ell_1(s){}^2 \ell_2'(s){}^2}{\ell_2(s){}^2}&-\ell_1'(s){}^2+\frac{s \ell_2'(s){}^2}{\ell_2(s)}-\ell_2'(s)-\frac{\tau _0 \ell_1(s){}^2}{\ell_2(s){}^2}-\frac{s \tau _0}{\ell_2(s)}-\tau _2 \nn\\
&=\frac{2\ell_1(s){}^2 \ell_2''(s)}{\ell_2(s)}-2 \ell_1(s) \ell_1''(s)+s \ell_2''(s)+2 \ell_2(s) \ell_1(s).
\end{align}
\end{example}

\section{A Riemann-Hilbert Problem for the Painlev\'{e} III Hierarchy}

\begin{figure}[t]
\centering 
\includegraphics[scale=0.4]{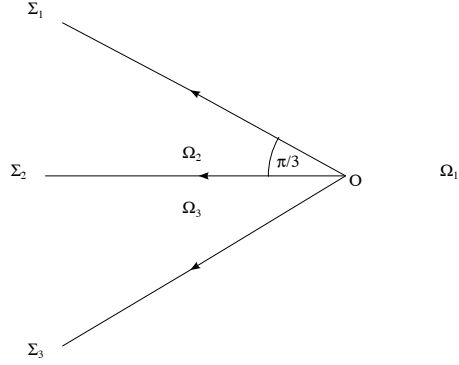}
\caption{The jump contour $\Gamma$ for the model RH problem for $\Psi$. Contours are labelled by $\Sigma$ and sectors by $\Omega$.}
\label{modelcontour}
\end{figure}

In \cite{ACM} it was shown that a solution to the $k$-th Painlev\'{e} III equation may be extracted
from the following RH problem.
\begin{itemize}
\item[(a)] $\Phi:\mathbb C\setminus\Sigma\to\mathbb C^{2\times 2}$ analytic. See Figure \ref{modelcontour}
\item[(b)] $\Phi$ has the jump relations $\Phi_+(z)=\Phi_-(z) j_i$ for $z\in \Sigma_i$
\beq
j_1 = \begin{pmatrix}1&0\\-e^{\pi i\alpha}&1\end{pmatrix}, \quad j_2 = \begin{pmatrix}0&-1\\1&0\end{pmatrix}, \quad j_3 = \begin{pmatrix}1&0\\-e^{-\pi i\alpha}&1\end{pmatrix}.
\eeq
\item[(c)] As $z\to\infty$, $\Phi$ has the asymptotic behaviour
\begin{multline}\label{Phi c}
\Phi(z)=\begin{pmatrix}1&0\\v(s)&1\end{pmatrix} \left(I+\frac{1}{z}\begin{pmatrix}w(s)&v(s)\\h(s)&-w(s)\end{pmatrix} +\bigO(z^{-2})\right)\\
\times e^{\frac{1}{4}i \pi \sigma_3}z^{-\frac{1}{4}\sigma_3}Ne^{sz^{1/2}\sigma_3},
\end{multline}
where $N=\frac{1}{\sqrt{2}}(I+i\sigma_1)$ and $v$, $h$ and $w$ are functions of $s$.
\item[(d)] As $z\to 0$, there exists a matrix $\Phi_0(s)$, independent of $z$, such that $\Phi$ has the asymptotic behaviour
\begin{equation}
\label{Phi0}
\Phi(z)=\Phi_0(s)(I+\bigO(z))e^{\frac{(-1)^{k+1}}{z^{k}}\sigma_3}z^{\frac{\alpha}{2} \sigma_3}H_j,
\end{equation}
for $z \in \Omega_j$, where $H_1, H_2, H_3$ are given by
\beq
H_1=I, \qquad H_2=\begin{pmatrix}1&0\\-e^{\pi i\alpha}&1\end{pmatrix}, \qquad H_3=\begin{pmatrix}1&0\\e^{-\pi i\alpha}&1\end{pmatrix}.
\eeq
\end{itemize}

We then have as a corollary of \cite{ACM} Theorem 1, that
\begin{theorem}
Let $\alpha>-1$, and let $\Phi(z;s)$ be the unique solution of the above model RH problem for $s>0$. Then, the limit
\beq
y(s) = -2 \frac{d}{ds}\left[\lim_{z \rightarrow \infty} s \Phi(z,s) e^{-s z^{1/2}\sigma_3} N^{-1}z^{\frac{1}{4}\sigma_3}e^{-\frac{1}{4}i \pi \sigma_3}\right]_{21}
\eeq
is a solution of the $k$-th member of the Painlev\'{e} III hierarchy
\end{theorem}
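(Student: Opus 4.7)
The plan is to follow the classical isomonodromic deformation method, as is standard for deriving Painlev\'e-type equations from model Riemann--Hilbert problems. Since the jumps $j_1, j_2, j_3$ in (b) do not depend on $s$, the two logarithmic derivatives
\beq
L(z,s) := \Phi_z(z,s)\,\Phi^{-1}(z,s), \qquad R(z,s) := \Phi_s(z,s)\,\Phi^{-1}(z,s)
\eeq
have no jumps across $\Sigma$ and are therefore single-valued on $\mathbb{C}\setminus\{0\}$. Their singularity structure is read off from the local asymptotics (c) and (d): at $z=\infty$ the factor $e^{sz^{1/2}\sigma_3}$ produces a contribution to $R$ growing like $z^{1/2}$, while the essential singularity $e^{(-1)^{k+1}z^{-k}\sigma_3}$ at $z=0$ forces $L$ to have a pole of order $k+1$ at the origin. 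Passing to the uniformising variable $\zeta=z^{1/2}$ removes the branch and turns both $L$ and $R$ into rational functions of $\zeta$ whose principal parts are fully determined by finitely many coefficients $\ell_j(s)$, together with $v(s), w(s), h(s)$ from (c) and $\Phi_0(s)$ from (d).

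The next step is to impose the zero-curvature compatibility condition $L_s - R_z + [L,R]=0$. Expanding in powers of $\zeta$ and collecting terms of each order produces, for $p=1,\dots,k$, a system of relations quadratic in the $\ell_j$'s and their derivatives which, after rearrangement, should match (\ref{P3def}); the constants $\tau_p$ emerge as integration constants fixed by the monodromy data (the exponents $\pm\alpha$ in (d) and the triangular Stokes factors in $j_1, j_3$), and the auxiliary function $u(s)$ appears as a particular coefficient of $L$ near $\zeta=0$, reducing to $-\frac{1}{4\ell_k^2}\bigl((\ell_k^2)''-3(\ell_k')^2+\tau_0\bigr)$ after using the first integral associated with the top equation. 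The normalisations $\ell_0(s)=s/2$ and $\ell_{k+1}(s)=0$ correspond respectively to the leading term of $R$ at infinity and to the truncation of the Laurent tail at the origin.

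To identify $y(s)$, I would substitute (c) into the bracketed limit; the exponentials and algebraic factors cancel term by term, leaving
\beq
\lim_{z\to\infty} s\,\Phi(z,s)\,e^{-sz^{1/2}\sigma_3}\,N^{-1}z^{\sigma_3/4}\,e^{-i\pi\sigma_3/4} \;=\; s\begin{pmatrix}1 & 0 \\ v(s) & 1\end{pmatrix},
\eeq
so that $y(s)=-2\bigl(sv(s)\bigr)'$. Using the relation between $v(s)$ and the sub-leading coefficients of $R$ at infinity (which ties $v$ algebraically to the $\ell_j$'s), this combination is identified with a specific solution of the hierarchy; Theorem~1 of \cite{ACM}, stated in a closely related set-up, then supplies the existence and uniqueness needed to close the argument.

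The principal obstacle is the combinatorial step in the second paragraph: showing that the compatibility equations really do assemble into the symmetric bilinear form of (\ref{P3def}). The expansion of $[L,R]$ at each order generates many cross-terms $\ell_i\ell_j$, mixed derivatives, and contributions involving $u$, and one must show they regroup into $\sum_{q=0}^{p}\bigl(\ell_{k-p+q+1}\ell_{k-q}-(\ell_{k-p+q}\ell_{k-q})''+3\ell_{k-p+q}'\ell_{k-q}'-4u\,\ell_{k-p+q}\ell_{k-q}\bigr)$ with the correct index shifts and sign conventions. It is precisely here that the Lenard-like recursive structure advertised in the introduction becomes visible, and keeping careful track of the $\zeta$-parity in the rational ansatz for $L$ and $R$ is the main technical subtlety.
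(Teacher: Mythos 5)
Your strategy---differentiating $\Phi$ in $z$ and $s$, using the constancy of the jumps to conclude that $A=\Phi_z\Phi^{-1}$ and $B=\Phi_s\Phi^{-1}$ have no jumps, and then exploiting compatibility---is indeed the route taken in \cite{ACM} and sketched in the Remark following the theorem, and your evaluation of the limit is correct: inserting \eqref{Phi c}, the right-hand factors cancel and $y(s)=-2\,(s\,v(s))'$. (A small point: the uniformisation $\zeta=z^{1/2}$ is unnecessary; the branch lives only in $\Phi$ itself, and since $A$ and $B$ are single-valued with a pole of order $k+1$ at the origin and polynomial growth at infinity, Liouville already makes them Laurent polynomials in $z$; cf.\ \eqref{ALax}--\eqref{BLax}, where $B=(z-u)\sigma_-+\sigma_+$ is linear in $z$.)

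The genuine gap is in the step you yourself flag as the principal obstacle, and it is not merely combinatorial bookkeeping. Expanding the zero-curvature condition order by order in $z$ does not produce the bilinear system \eqref{P3def}. After eliminating $a$ and $c$ through \eqref{asol}--\eqref{aceqn}, compatibility reduces to a single scalar equation for $b$, and the Laurent ansatz for $b$ turns its coefficients into the Lenard-type recursion \eqref{lenardtype}, $\ell_{j+1}'=\ell_j'''+4u\ell_j'+2u'\ell_j$, with boundary data $\ell_0=s/2$, $\ell_{k+1}=0$; these relations are once-differentiated compared with \eqref{P3def}. The equations \eqref{P3def} are first integrals of this recursion, and the $\tau_p$ are the corresponding integration constants; they never appear as coefficients in the expansion, so no algebraic ``rearrangement'' of the coefficient equations yields them without an integration. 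Supplying those first integrals is exactly the content of the theorem in Section 4: one needs the identity $\ell_m\ell_{n+1}'+\ell_n\ell_{m+1}'=\Omega_{n,m}'$ of \eqref{masterid} and the telescoping along anti-diagonals \eqref{masterid2}, which works precisely because $\ell_{k+1}'=0$. Likewise, the identification $-2(sv)'=\ell_1$ is asserted by appeal to \cite{ACM} rather than derived from the relation between $v$, $w$, $h$ and the coefficients of $b$. As written, your outline therefore reduces the statement to the two steps it was meant to establish; to close it you need the constants-of-motion argument (or an equivalent integration of \eqref{lenardtype}) together with the explicit matching of $s\,v$ with $\ell_1$.
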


\begin{remark}
The proof of this theorem identifies $y$ with $\ell_1$ in the Painlev\'{e} III hierarchy. Furthermore it also demonstrates that the Lax pair for $\Phi$ takes the form,
\begin{align}
\label{ALax}
&A(z,s) = a(z,s) \sigma_3 + b(z,s) \sigma_+ + c(z,s) \sigma_-,\\
\label{BLax}
&B(z;s)=(z-u(s))\sigma_- + \sigma_+.
\end{align}
where $a$, $b$ and $c$ are related by,
\begin{align}
\label{asol}
&a(z,s) = -\frac{1}{2} \partial_s b(z,s),\\
\label{csol}
&c(z,s) = (z-u) b(z,s) - \frac{1}{2} \partial_s^2 b(z,s),\\
\label{aceqn}
&\partial_s c(z,s) = 1 + 2(z-u(s))a(z,s).
\end{align}
Substituting \eqref{asol} and \eqref{csol} into \eqref{aceqn} yields
\beq
z \partial_s b(z,s) = \frac{1}{4} \left(\partial_s^3 b(z,s) + 4 u(s) \partial_s b(z,s) + 2 u'(s) b(z,s) \right) + \frac{1}{2}.
\eeq
We may compute $b(z,s)$ by substituting $b(z,s) = \frac{4}{(4z)^{k+1}}\sum^k_{j=0} \ell_{k-j}(s) (4z)^j$ into the above equation to get,
\beq
\label{lenardtype}
\ell_{j+1}'(s) = \ell_j'''(s) + 4u(s) \ell_j'(s) + 2u'(s)\ell_j(s), 
\eeq
\end{remark}

\section{Integration of Lenard-type Recursion Relations}

The recursion relation \eqref{lenardtype} is the Lenard recursion relation appearing in the Painlev\'{e} I and II hierarchies. In those cases the initial condition is $\ell_0 = \half$ whereas here we have $\ell_0 = s/2$. Let us consider the general case where $\ell_0$ and $\ell_{k+1}$ are known functions. The fact that $\ell_{k+1}$ is known implies $u(s)$ satisfies a integro-differential equation of order $3k+1$. The following lemma gives $k+1$ constants of motion, i.e. functions of $u(s)$ and its derivatives which are constant in $s$. This allows the equation for $u(s)$ to be reduced to an ODE of order $2k$ and we will see that these constants of motion are precisely the ODEs appearing in \eqref{P3def}.

\begin{theorem}
Let $\ell_j$ be the integro-differential polynomials in $u$ generated by the Lenard recursion relation \eqref{lenardtype} together with an initial condition for $\ell_0$. Furthermore let $\ell_{k+1}$ also have a given form. The integro-differential equation corresponding to $\ell_{k+1}$ has the following constants of motion;
\beq
\tau_{p} = -\ell_{k+1} \ell_{k-p} + \sum^p_{q=0}\lrC{\ell_{k-q}\ell_{k-p+q+1}-\Omega_{k-p+q,k-q}}, \qquad 0 \leq p \leq k
\eeq
if $\ell'_{k+1} = 0$, and
\beq
\sigma_{p} = -\ell_0 \ell_p + \sum^{p-1}_{q=0} \lrC{\Omega_{p-1-q,q} - \ell_{p-1-q}\ell_{q+1}} , \qquad 0 \leq p \leq k
\eeq
if $\ell'_{0} = 0$.
In the above expressions we have introduced,
\beq
\Omega_{n,m}(s) := (\ell_n \ell_m)'' - 3\ell_n' \ell_m' +4u \ell_n \ell_m.
\eeq
\end{theorem}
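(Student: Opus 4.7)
The plan is to show directly that $\tau_p'(s)\equiv 0$ when $\ell_{k+1}'=0$ (and, symmetrically, $\sigma_p'(s)\equiv 0$ when $\ell_0'=0$), by differentiating in $s$ and using the Lenard recursion \eqref{lenardtype}. The whole argument rests on a single identity that explains why the particular quadratic combination $\Omega_{n,m}$ is the right ``current'' for the recursion.

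First I would establish the key identity
\beq\label{Omegaprime}
\Omega_{n,m}'(s) = \ell_n(s)\,\ell_{m+1}'(s) + \ell_m(s)\,\ell_{n+1}'(s).
\eeq
This is a direct computation: differentiating $\Omega_{n,m}=(\ell_n\ell_m)''-3\ell_n'\ell_m'+4u\ell_n\ell_m$ and using the Leibniz identity $\ell_n'''\ell_m+\ell_n\ell_m'''=(\ell_n\ell_m)'''-3(\ell_n'\ell_m')'$ reduces $\Omega_{n,m}'$ to
\beq
\ell_m\lrC{\ell_n'''+4u\ell_n'+2u'\ell_n}+\ell_n\lrC{\ell_m'''+4u\ell_m'+2u'\ell_m},
\eeq
and the two brackets are $\ell_{n+1}'$ and $\ell_{m+1}'$ by \eqref{lenardtype}.

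With \eqref{Omegaprime} in hand, differentiating $\tau_p$ and substituting causes the term $\ell_{k-q}\ell_{k-p+q+1}'$ arising from $(\ell_{k-q}\ell_{k-p+q+1})'$ to cancel against half of $\Omega_{k-p+q,k-q}'$, leaving
\beq
\tau_p'(s) = -\ell_{k+1}'\ell_{k-p} - \ell_{k+1}\ell_{k-p}' + \sum_{q=0}^{p}\left[\ell_{k-q}'\ell_{k-p+q+1} - \ell_{k-q+1}'\ell_{k-p+q}\right].
\eeq
Setting $G(q):=\ell_{k-q+1}'\ell_{k-p+q}$, the summand is exactly $G(q+1)-G(q)$, so the sum telescopes to $G(p+1)-G(0)=\ell_{k-p}'\ell_{k+1}-\ell_{k+1}'\ell_{k-p}$. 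The terms $-\ell_{k+1}\ell_{k-p}'$ and $+\ell_{k-p}'\ell_{k+1}$ cancel, giving $\tau_p'(s)=-2\ell_{k+1}'\ell_{k-p}$, which vanishes by hypothesis. The $\sigma_p$ case runs in precisely the same way, with the telescoping now driven by $H(q):=\ell_q\ell_{p-q}'$ and yielding $\sigma_p'(s)=-2\ell_0'\ell_p$.

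The main obstacle is conceptual rather than computational: recognising that $\Omega_{n,m}$ is the ``conserved current'' associated with the Lenard recursion, in the sense that its $s$-derivative is bilinear in $\ell$'s and the \emph{shifted} derivatives $\ell_{\bullet+1}'$. The specific coefficients $(1,-3,4u)$ in $\Omega_{n,m}$ are forced by the structure of \eqref{lenardtype}; any other quadratic ansatz leaves residual $\ell''$-terms that spoil the telescoping. Once \eqref{Omegaprime} is in place, the remainder of the proof is routine index bookkeeping.
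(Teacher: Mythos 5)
Your proposal is correct and rests on exactly the paper's key identity, $\Omega_{n,m}'=\ell_n\ell_{m+1}'+\ell_m\ell_{n+1}'$ (the paper's \eqref{masterid}), with the same telescoping along anti-diagonals that the paper encodes in its iterated relation \eqref{masterid2}. The only difference is direction of presentation: the paper integrates the shifted identity step by step to \emph{construct} the conserved quantities, whereas you \emph{verify} the stated $\tau_p$, $\sigma_p$ by differentiating and telescoping (your explicit formulas $\tau_p'=-2\ell_{k+1}'\ell_{k-p}$ and $\sigma_p'=-2\ell_0'\ell_p$ are consistent with this), which is essentially the same argument.
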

\begin{proof}
We begin with the following identity,
\beq
\label{masterid}
\ell_m \ell_{n+1}'+ \ell_n \ell_{m+1}' = \Omega_{n,m}'
\eeq
This identity can be established by the following argument,
\begin{align}
\ell_m  ( \ell_{n+1})' &= \ell_m \ell_{n}''' + 4u \ell_m \ell_n' + 2u' \ell_m \ell_n\\
&= \ell_m \ell_{n}''' + 4u \ell_m \ell_n' + 4u' \ell_m \ell_n -\ell_n ( \ell_{m+1}'-4 u \ell_m'-\ell_m''')\\
&= \ell_m \ell_{n}''' + \ell_n \ell_{m}''' + 4(u \ell_m \ell_n)' - \ell_n \ell_{m+1}'\\
&\Rightarrow \ell_m \ell_{n+1}'+\ell_n \ell_{m+1}' = \Omega_{n,m}'
\end{align}
where the first line is the Lenard-type recursion relation multiplied by $\ell_m$, the second line is
obtained by grouping terms and applying the recursion relation for $\ell_m$. The final line is obtained
from the elementary identity,
\beq
\ell_m \ell_{n}''' + \ell_n \ell_{m}''' = (\ell_n \ell_m)'''-3 (\ell_n' \ell_m')'.
\eeq
Now note that integrating \eqref{masterid} by parts and letting $n \mapsto n-1$ gives,
\beq
\label{masterid2}
\ell_m \ell_{n}'= \ell_{m+1} \ell_{n-1}' + \lrS{\Omega_{n-1,m} - \ell_{n-1}\ell_{m+1}}'.
\eeq
Geometrically this equation says that the quantity $\ell_m \ell_{n}'$ only picks up total derivatives if we move along the anti-diagonals of the lattice points labelled by $(m,n)$ for $0\leq m,n \leq k+1$. We can now see how the constants of motion arise. If, due to the boundary conditions on $\ell_0$ and $\ell_{k+1}$, the quantity $\ell_m \ell_{n}'$ is a total derivative on the border of the lattice (i.e. when $m$ or $n$ are $0$ or $k+1$) then we can produce a total derivative equal to zero by using \eqref{masterid2} to move across the lattice from one border to another. Explicitly, moving a distance $r$ along an anti-diagonal gives,
\beq
\ell_m \ell_{n}'= \ell_{m+r} \ell_{n-r}' + \lrS{\sum^{r-1}_{q=0}\Omega_{n-q-1,m+q} - \ell_{n-q-1}\ell_{m+q+1}}'.
\eeq
If $\ell'_{k+1} = 0$ then let $m=k-p$, $n=k+1$ and $r=p+1$. This gives,
\beq
\frac{d}{ds} \lrS{\ell_{k+1} \ell_{k-p} + \sum^p_{q=0}\lrC{\Omega_{k-p+q,k-q} -\ell_{k-q}\ell_{k-p+q+1}}} = 0
\eeq
which integrating gives the first part of the theorem. If $l'_0=0$ then let $m=0$ and $r = n$, this gives,
\beq
\frac{d}{ds} \lrS{\ell_0 \ell_p - \sum^{p-1}_{q=0} \lrC{\Omega_{p-1-q,q} - \ell_{p-1-q}\ell_{q+1}}} = 0,
\eeq
which integrating gives the second part of the theorem.
\end{proof}

\begin{remark}
Setting $\ell_{k+1}=0$ and $\ell=s/2$ in the Lenard recursion relation implies the $\tau_{p}$ are constant and we recover \eqref{P3def}.
\end{remark}

\begin{remark}
Consider the standard Lenard differential polynomials obtained with the boundary condition $\ell_0 = \frac{1}{2}$ and
setting all integration constants to zero. From the above theorem we find that $\sigma_{p}$ are constants, which by the definition of the standard Lenard differential polynomials must be zero. We therefore have, after some rearranging,
\beq
\ell_p = \sum^{p-2}_{q=0} \lrC{\Omega_{p-1-q,q} - \ell_{p-1-q}\ell_{q+1}} + \Omega_{0,p-1}.
\eeq
The right hand side of the above equation only contains $\ell_n$ with $n<p$ while the left hand side only contains $\ell_p$. Hence we can use these equations to recursively determine $\ell_p$, which are exactly the Lenard differential polynomials. Note that this shows each Lenard differential polynomial is indeed a differential polynomial, a fact that is not obvious from their usual definition \eqref{lenardtype}.
\end{remark}

\section*{Acknowledgements} 
MA is supported by the European Research Council under the European Union's Seventh Framework Programme (FP/2007/2013)/ ERC Grant Agreement n.\, 307074, by the Belgian Interuniversity Attraction Pole P07/18, and by F.R.S.-F.N.R.S.

\end{document}